\newcolumntype{d}[1]{D{.}{.}{#1}}
\newcolumntype{t}[1]{D{,}{,}{#1}}
\newcolumntype{i}[1]{D{.}{}{#1}}
\newtheorem{theorem}{Theorem}[section]
\theoremstyle{plain} 
\newtheorem*{asA}{Assumption A}
\begin{document}

\title{Comment: Individualized Treatment Rules\\
Under Endogeneity}

\author{Sukjin (Vincent) Han\thanks{\protect\href{mailto:vincent.han@bristol.ac.uk}{vincent.han@bristol.ac.uk}}\\
 Department of Economics\\
 University of Bristol}

\date{September 28, 2020}
\maketitle

\section{Heterogeneity, Individualized Treatments, and Endogeneity}

For the past few decades, scholars in statistical and social sciences
have acknowledged that heterogeneity is prevalent in the subjects
of their studies. The literature on individualized treatments is a
great example that embraces heterogeneity and develops fruitful policies
that benefit members of targeted populations. I congratulate Yifan
Cui, Eric Tchetgen Tchetgen, Hongxiang Qiu, Marco Carone, Ekaterina
Sadikova, Maria Petukhova, Ronald C. Kessler, and Alex Luedtke for
successfully pursuing this literature's endeavor. Their works---\citet{cui2019semiparametric},
\citet{qiu2020optimal}---especially stand out as they relax the
assumption of ``no unmeasured confounders'' that this entire literature
has relied on. In experimental and observational studies, there are
ample examples where endogeneity in treatment decisions cannot be
solely captured by observables, i.e., measured confounders. In a sense,
allowing for treatment endogeneity becomes relevant in a world of
heterogeneity, because typically, unobserved heterogeneity is a core
factor of individuals decisions to receive treatments. Therefore,
in my view, considering endogeneity in developing individualized treatments
is quite natural. 

I appreciate what \citet{cui2019semiparametric} and \citet{qiu2020optimal}
have achieved. Both use instrumental variable (IV) methods, but each
of them makes unique contributions. Under a set of identifying conditions,
\citet{cui2019semiparametric} derive a simple, closed-form expression
for the counterfactual mean based on which the (non-stochastic) optimal
regime is identified. The most notably result is Theorem 2.2, in which
the optimal regime is identified even without observing the endogenous
treatment. This is especially relevant in experimental settings where
partial compliance is suspected but the information about treatment
decisions is missing, e.g., due to the cost of fidelity assessments.
For example, the experimental dataset in \citet{murphy2001marginal}
has information from fidelity checks, but only aggregated information
(i.e., the compliance rate) is available. \citet{qiu2020optimal}
consider regimes that are allowed to be stochastic, which is pertinent
to the existence of budget constraints. They target not only optimal
treatments but also optimal encouragements, which is a sensible idea
given the use of the IV framework. With these notable features, they
proceed by employing the main identifying conditions similar to \citet{cui2019semiparametric}.
I appreciate that they develop the asymptotic distributions for the
estimated gains from optimal regimes, which enables inference.

I read the two papers with great pleasure. In this note, I discuss
the main identifying conditions commonly employed in the papers. Inspired
by their analyses, I propose alternative identifying conditions. Then,
motivated from these discussions, I turn the focus from the point
identification to partial identification. Along the way, I mention
two closely related and complementing papers on optimal dynamic treatment
regimes in longitudinal settings written by myself. I conclude by
making remarks on possible future directions for research.

\section{About Main Identifying Conditions}

In the presence of heterogeneity and endogeneity, identifying the
average treatment effects (ATEs), and thus optimal treatment regimes,
is a challenging task. To overcome this challenge, \citet{cui2019semiparametric}
and \citet{qiu2020optimal} utilize identifying conditions (Assumptions
7 and 8 in the former, Conditions A5b-1(b) and 2(b) in the latter)
that are introduced in \citet{wang2018bounded}. Since the identification,
estimation, and inference of the two papers rely on these conditions,
I would like to discuss them in depth here. In the next section, I
propose alternative identifying conditions.

For convenience, I follow the notation and labels of \citet{cui2019semiparametric};
in the parenthesis, I indicate the labels of \citet{qiu2020optimal}.
I attempt to interpret Assumptions 7 and 8 (Conditions A5b-1(b) and
2(b)) and understand their implications, starting from Assumption
8 (Condition A5b-2(b)). To facilitate the discussion, we maintain
Assumption 10 (Condition A6a) that $Z\perp(A_{z},Y_{a})|L$ (causal
IV). Then, Assumption 8, which the authors call ``independent compliance
type,'' can be expressed as
\begin{align*}
E[A_{1}-A_{-1}|L,U] & =E[A_{1}-A_{-1}|L].
\end{align*}
This expression reveals that the compliance type is defined by the
response to IV, i.e., $A_{1}-A_{-1}$, and that the assumption concerns
conditional mean independence between $A_{1}-A_{-1}$ and $U$. Since
$A_{1}-A_{-1}$ is not binary, $A_{1}-A_{-1}$ and $U$ can be associated
through higher order moments. Nonetheless, $A_{1}-A_{-1}$ takes only
three values ($A_{1}-A_{-1}\in\{-1,0,1\}$), and thus the mean independence
would still significantly restrict the joint distribution, yielding
a specific form of treatment endogeneity. At least to an econometrician
like myself, this is an unfortunate feature. The dependence of $A_{1}-A_{-1}$
and $U$ is the source of what the econometrics literature calls ``compliance
heterogeneity.'' For example, the literature on the local ATE (LATE)
and the marginal treatment effect (MTE) essentially builds on this
notion. I believe this issue can be partly mitigated by treating $L$
to be high-dimensional. In fact, both papers' estimation methods allow
for high-dimensional explanatory covariates.\footnote{It is worth noting that even under Assumption 8, $A_{z}$ itself does
not have to be mean independent of $U$. However, $U$ seems hard
to be cancelled out in $E[A_{1}|L,U]-E[A_{-1}|L,U]$ when $A_{z}$
has structure of a binary choice model.}

\citet{cui2019semiparametric} do relax Assumption 8 by introducing
Assumption 7. As noted in their paper, Assumption 8 implies Assumption
7. Also, Condition A5b-1(b) in \citet{qiu2020optimal} implies Assumption
7. Condition A5b-1(b) posits that $E[Y_{1}-Y_{-1}|L,U]=E[Y_{1}-Y_{-1}|L]$,
i.e., the treatment effect $Y_{1}-Y_{-1}$ and the confounder $U$
are conditionally mean independent. This limits the treatment effect
heterogeneity. We definitely want to avoid the additive structure
$Y_{a}=g_{a}(L)+U$, because it significantly restricts \textit{unobserved}
heterogeneity. Also, with binary $Y$ as in the empirical examples
of both papers, it is more natural to assume a nonseparable structure
of binary choice models.

Therefore, I would like to investigate Assumption 7 as it is. Again,
under the causal IV assumption, Assumption 7 can be expressed as
\begin{align*}
Cov\{E[Y_{1}-Y_{-1}|L,U],E[A_{1}-A_{-1}|L,U]|L\} & =0.
\end{align*}
Let us consider a more generic problem after suppressing $L$: for
some generic scalar-valued functions $\delta$ and $\gamma$ of $U$,
consider the assertion that
\begin{align*}
Cov(\delta(U),\gamma(U)) & =0.
\end{align*}
This assertion requires a specific relationship between the distribution
of $U$ and the shape of conditional mean functions $\delta(\cdot)$
and $\gamma(\cdot)$. This is most clearly seen in the following simple
example. Suppose $U\in\{u_{1},u_{2}\}$ is binary (e.g., high and
low unobserved health types) with $\Pr[U=u_{1}]=p_{1}>0$. Then, it
should be that either $\delta$ or $\gamma$ is a constant function.
To see this, let $\delta_{1}\equiv\delta(u_{1})$, $\delta_{2}\equiv\delta(u_{2})$,
$\gamma_{1}\equiv\gamma(u_{1})$, and $\gamma_{2}\equiv\gamma(u_{2})$.
Then,
\begin{align*}
Cov(\delta(U),\gamma(U)) & =p_{1}\delta_{1}\gamma_{1}+(1-p_{1})\delta_{2}\gamma_{2}-\{p_{1}\delta_{1}+(1-p_{1})\delta_{2}\}\{p_{1}\gamma_{2}+(1-p_{1})\gamma_{2}\}\\
 & =p_{1}(1-p_{1})\delta_{1}(\gamma_{1}-\gamma_{2})-p_{1}(1-p_{1})\delta_{2}(\gamma_{1}-\gamma_{2}).
\end{align*}
Therefore, $Cov(\delta(U),\gamma(U))=0$ if and only if $\delta_{1}(\gamma_{1}-\gamma_{2})=\delta_{2}(\gamma_{1}-\gamma_{2})$,
or equivalently, $\delta_{1}=\delta_{2}$ or $\gamma_{1}=\gamma_{2}$,
which proves the claim. This result shows that when $U$ is distributed
as Bernoulli, Assumption 7 is equivalent to Assumption 8 or Condition
A5b-1(b). More generally, $\delta(\cdot)$ and $\gamma(\cdot)$ need
to follow specific shapes given the particular distribution of $U$.
It would be worth investigating how sensitive the estimation and inference
results of \citet{cui2019semiparametric} and \citet{qiu2020optimal}
are when these identifying conditions fail or nearly fail. For example,
one can ask how much suboptimal the estimated regime would become
under the misspecification.

\section{Alternative Identifying Condition for Optimal Regimes}

Inspired by the identification analyses of \citet{cui2019semiparametric}
and \citet{qiu2020optimal}, I would like to propose a simple identifying
condition for optimal dynamic regimes. This condition can be used
as an alternative to Assumption 7 or 8 in \citet{cui2019semiparametric}
and Conditions A5b-1(b) or 2(b) in \citet{qiu2020optimal}. It can
also be an alternative to the identifying conditions in \citet{han2018nonparametric},
which proposes the use of extra exogenous variables besides binary
IVs to identify the dynamic treatment effects and optimal sequential
treatment regimes in longitudinal settings. 

To state the condition, I follow the notation in \citet{cui2019semiparametric}.

\begin{asA}The following two conditions hold for every $L=l$: (a)
either $E[Y_{1}|L,U]\ge E[Y_{-1}|L,U]$ a.s. $[F_{U|L}]$ or $E[Y_{1}|L,U]\le E[Y_{-1}|L,U]$
a.s. $[F_{U|L}]$; (b) either $E[A_{1}|L,U]\ge E[A_{-1}|L,U]$ a.s.
$[F_{U|L}]$ or $E[A_{1}|L,U]\le E[A_{-1}|L,U]$ a.s. $[F_{U|L}]$.

\end{asA}

Assumption A(a) posits that, conditional on $L$, the sign of the
ATE is maintained across individuals defined by unobserved type $U$,
i.e., $\text{sign}\{E[Y_{1}-Y_{-1}|L,U]\}=\text{sign}\{E[Y_{1}-Y_{-1}|L]\}$
a.s. $[F_{U|L}]$. For example, \citet{cui2019semiparametric} mention
in their empirical example (Section 5) that ``one might expect that
having a third child would generally reduce a mother's labor participation
even if the effects are heterogeneous.'' Assumption A(a) can be consistent
with this story. This assumption even allows the following: the effect
of a third child on a mother's labor participation to be positive
for less educated women ($L\le l_{0}$) regardless of their level
of aspiration ($U$), and be negative for highly educated women ($L>l_{0}$)
regardless of their $U$.

A similar interpretation can be made for Assumption A(b) in terms
of the effect of the IV on the treatment decision. This part is weaker
than the LATE monotonicity assumption that $\Pr(A_{1}\ge A_{-1})=1$
(which also appears as Assumption 9 in \citet{cui2019semiparametric}).
This is because when $A_{1}\ge A_{-1}$ a.s. then $E[A_{1}|L,U]\ge E[A_{-1}|L,U]$
a.s. $[F_{U|L}]$, but the converse is not necessarily true.

\begin{theorem}Under Assumptions 2--6 (\citet{cui2019semiparametric})
and Assumption A, $\arg\max_{\mathcal{D}}E[Y_{\mathcal{D}(L)}]$ is
nonparametrically identified by
\begin{align}
\arg\max_{\mathcal{D}}E[Y_{\mathcal{D}(L)}] & =\arg\max_{\mathcal{D}}E\left[\frac{ZI\{A=\mathcal{D}(L)\}YA}{\delta(L)f(Z|L)}\right].\label{eq:ID1}
\end{align}
Furthermore,
\begin{align}
\arg\max_{\mathcal{D}}E[Y_{\mathcal{D}(L)}] & =\arg\max_{\mathcal{D}}E\left[\frac{I\{Z=\mathcal{D}(L)\}Y}{\delta(L)f(Z|L)}\right].\label{eq:ID2}
\end{align}

\end{theorem}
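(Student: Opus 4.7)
The oracle regime satisfies $\mathcal{D}^*(L) = \arg\max_{d \in \{-1, 1\}} E[Y_d \mid L]$, whose value is $s_Y(L) := \text{sign}(E[Y_1 - Y_{-1} \mid L])$ when the conditional ATE is nonzero. Each right-hand side in (\ref{eq:ID1}) and (\ref{eq:ID2}) is an expectation of a functional that depends on $\mathcal{D}$ only through the pointwise evaluation $\mathcal{D}(L)$, so maximizing over $\mathcal{D}$ reduces to a pointwise comparison of $d = 1$ versus $d = -1$ at each $L$. It therefore suffices to show, for each $L$, that the sign of the $d = 1$ minus $d = -1$ difference on each RHS equals $s_Y(L)$.

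For (\ref{eq:ID1}), the plan is to iterate expectations. Conditioning on $(Z, L)$ cancels $f(Z \mid L)$ against the $Z$-propensity weights; then consistency $Y = Y_A$, the causal IV condition $Z \perp (A_z, Y_a) \mid L$, the latent-unconfoundedness structure $Y_a \perp (A_1, A_{-1}) \mid L, U$ that underlies Assumptions 2--6, and the identity $I\{A_1 = d\} - I\{A_{-1} = d\} = d(A_1 - A_{-1})/2$ (valid since $A_z, d \in \{-1, 1\}$) collapse the inner conditional expectation, up to the positive constant $1/2$, to
\[
\frac{1}{\delta(L)} \int E[Y_{\mathcal{D}(L)} \mid L, U]\, E[A_1 - A_{-1} \mid L, U]\, dF_{U \mid L}(u).
\]
The $d = 1$ versus $d = -1$ difference is then $\int (E[Y_1 - Y_{-1} \mid L, U])(E[A_1 - A_{-1} \mid L, U])\, dF_{U \mid L}$ divided by $\delta(L)$. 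Assumption A(a) pins the sign of the first factor to $s_Y(L)$ on the support of $U \mid L$, and Assumption A(b) pins the sign of the second to some $s_A(L)$; hence the product has constant sign $s_Y(L) s_A(L)$ and the integral inherits this sign. Since $\delta(L) = \int E[A_1 - A_{-1} \mid L, U]\, dF_{U \mid L}$ likewise has sign $s_A(L)$, the quotient has sign $s_Y(L)$, matching $\mathcal{D}^*(L)$.

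For (\ref{eq:ID2}), I would first integrate out $Z$ using $f(Z \mid L)$ to obtain $E[\,E[Y \mid Z = \mathcal{D}(L), L] / \delta(L)\,]$. Causal IV plus consistency give $E[Y \mid Z = z, L] = E[Y_{A_z} \mid L]$, and latent unconfoundedness produces the expansion $E[Y_{A_z} \mid L, U] = E[Y_{-1} \mid L, U] + \Pr(A_z = 1 \mid L, U) (E[Y_1 - Y_{-1} \mid L, U])$. Using $\Pr(A_z = 1 \mid L, U) = (1 + E[A_z \mid L, U])/2$, the $z = 1$ versus $z = -1$ difference is proportional to the same integral $\int (E[Y_1 - Y_{-1} \mid L, U])(E[A_1 - A_{-1} \mid L, U])\, dF_{U \mid L}$, so the identical sign argument combined with division by $\delta(L)$ again yields sign $s_Y(L)$. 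The main obstacle I anticipate is not the algebra but the clean invocation of the \emph{joint} sign restriction in Assumption A: neither integral is a monotone function of $d$, so what rules out sign cancellation across $U$ inside the integrand is precisely the a.s.\ single-sign property of each factor. Once that property is in hand, the argmax identification is immediate, and the argument is symmetric under flips of either $s_A(L)$ or $s_Y(L)$.
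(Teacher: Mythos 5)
Your proposal is correct and follows essentially the same route as the paper: both reduce each objective (up to a $\mathcal{D}$-free term) to the weighted integral $\int E[Y_{1}-Y_{-1}|L,U]\,E[A_{1}-A_{-1}|L,U]\,dF_{U|L}/\delta(L)$ --- the paper imports this decomposition from the proofs of Theorems 2.1 and 2.2 of \citet{cui2019semiparametric}, which you re-derive --- and then invoke the a.s.\ constant-sign property in Assumption A to rule out cancellation inside the integral, so that the pointwise optimizer matches $\text{sign}\{E[Y_{1}-Y_{-1}|L]\}$. Your explicit bookkeeping of $s_{A}(L)$ is marginally more careful than the paper's ``without loss of generality, assume $\delta(L)>0$,'' but the substance is identical.
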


\begin{proof}To prove \eqref{eq:ID1}, by the first and second derivations
in the proof of Theorem 2.1 in \citet{cui2019semiparametric},
\begin{align}
E\left[\frac{ZI\{A=\mathcal{D}(L)\}YA}{\delta(L)f(Z|L)}\right] & =E\left[\tilde{\gamma}(L,U)\tilde{\delta}(L,U)\frac{I\{\mathcal{D}(L)=1\}}{\delta(L)}\right]+E\left[\kappa(L,U)\right],\label{eq:proof_thm1}
\end{align}
where $\kappa(L,U)\equiv\tilde{\delta}(L,U)\frac{E[Y_{-1}|L,U]}{\delta(L)}$,
$\tilde{\delta}(L,U)\equiv\Pr(A=1|Z=1,L,U)-\Pr(A=1|Z=-1,L,U)$, and
$\delta(L)\equiv\Pr(A=1|Z=1,L)-\Pr(A=1|Z=-1,L)$. Recall that, with
$\text{\ensuremath{\Delta}}(L)\equiv E[Y_{1}-Y_{-1}|L]$,
\begin{align}
\arg\max_{\mathcal{D}}E[Y_{\mathcal{D}(L)}] & =\arg\max_{\mathcal{D}}E\left[\text{\ensuremath{\Delta}}(L)I\{\mathcal{D}(L)=1\}\right].\label{eq:proof_thm1_1}
\end{align}
We prove that maximizing the right side of \eqref{eq:proof_thm1}
is equivalent to maximizing $E\left[\text{\ensuremath{\Delta}}(L)I\{\mathcal{D}(L)=1\}\right]$
under Assumption A. Without loss of generality, assume $\delta(L)>0$.
Under Assumption A, for any given $l$,
\begin{align*}
\text{sign}\{\tilde{\gamma}(l,U)\}=\text{sign}\{\text{\ensuremath{\Delta}}(l)\} & \text{ a.s.}
\end{align*}
and
\begin{align*}
\text{sign}\{\tilde{\delta}(l,U)\} & =\text{sign}\{\delta(l)\}=1\text{ a.s.}
\end{align*}
Therefore,
\begin{align}
\text{sign}\{E[\tilde{\gamma}(L,U)\tilde{\delta}(L,U)/\delta(L)|L=l]\} & =\text{sign}\{\text{\ensuremath{\Delta}}(l)\}.\label{eq:proof_thm1_2}
\end{align}
Note that choosing $\mathcal{D}^{*}$ that maximizes $E[\text{\ensuremath{\Delta}}(L)I\{\mathcal{D}(L)=1\}]$
in \eqref{eq:proof_thm1_1} is equivalent to choosing $\mathcal{D}^{*}$
such that $\mathcal{D}^{*}(L)=1$ whenever $\text{\ensuremath{\Delta}}(L)>0$
and $\mathcal{D}^{*}(L)=-1$ otherwise. But by \eqref{eq:proof_thm1_2},
the latter is equivalent to choosing $\mathcal{D}^{*}$ such that
$\mathcal{D}^{*}(L)=1$ whenever $E[\tilde{\gamma}(L,U)\tilde{\delta}(L,U)/\delta(L)|L]>0$
and $\mathcal{D}^{*}(L)=-1$ otherwise. Finally, this is equivalent
to choosing $\mathcal{D}^{*}$ that maximizes
\begin{align*}
E\left[E[\tilde{\gamma}(L,U)\tilde{\delta}(L,U)/\delta(L)|L]I\{\mathcal{D}(L)=1\}\right] & =E\left[\tilde{\gamma}(L,U)\tilde{\delta}(L,U)\frac{I\{\mathcal{D}(L)=1\}}{\delta(L)}\right],
\end{align*}
which completes the proof as $E\left[\kappa(L,U)\right]$ in \eqref{eq:proof_thm1}
does not depend on $\mathcal{D}$.

To prove \eqref{eq:ID2}, by the first derivation in the proof of
Theorem 2.2 in \citet{cui2019semiparametric} (except the last equality),
\begin{align}
E\left[\frac{I\{Z=\mathcal{D}(L)\}Y}{\delta(L)f(Z|L)}\right] & =E\left[\tilde{\gamma}(L,U)\tilde{\delta}(L,U)\frac{I\{\mathcal{D}(L)=1\}}{\delta(L)}\right]+E\left[\kappa(L,U)\right],\label{eq:proof_thm1-1}
\end{align}
where $\kappa(L,U)$ is defined in \citet{cui2019semiparametric}
and does not depend on $\mathcal{D}$. Then, by the same argument
as in the previous case, we have the desired result.\end{proof}

Although Assumption A is helpful to identify the optimal dynamic regime,
it cannot be directly used to identify the value function, whereas
Assumption 8 in \citet{cui2019semiparametric} and Conditions A5b-1(b)
or 2(b) in \citet{qiu2020optimal} are powerful enough to identify
it.\footnote{Assumption A may still have identifying power for the value function
if one is willing to take a partial identification approach.}

\section{Partial Identification Approach}

Although the point identification approach permits powerful inference
on treatment effects and optimal regimes, as is shown in \citet{cui2019semiparametric}
and \citet{qiu2020optimal}, it has to rely on some versions of extrapolative
assumptions. This is inevitable, because when the instrument is binary,
it is known to have no identifying power for general non-compliers,
e.g., always-takers and never-takers, but only identifies the effect
for compliers as the LATE. Therefore, in order to identify the ATE,
which is the effect for the entire population that includes always-takers
and never-takers, and which is a relevant parameter to recover the
optimal regime, it requires means of extrapolation. Broadly speaking,
the assumptions used in \citet{cui2019semiparametric}, \citet{qiu2020optimal},
and \citet{han2018nonparametric} operate in a way that extrapolates
the LATE to the effects for different subpopulations.

Extrapolation inevitably demands prior beliefs from the researcher
or policy maker. Sometimes, one may be interested in knowing how much
she can learn in the absence of extrapolative assumptions and how
sensitive her point identification results are to these assumptions.
This view is shared by \citet{qiu2020optimal} who mention that ``{[}i{]}n
future work, it would be interesting to develop a framework for sensitivity
analyses for Condition A5b.'' In this regard, partial identification
can be a fruitful approach. \citet{cui2019semiparametric} briefly
touch upon this approach. In the Appendix, they derive bounds on the
average potential outcome $E[Y_{\mathcal{D}(L)}]$ for the case of
binary $Y$ using \citet{balke1997bounds}'s bounds and find regimes
that maximize corresponding lower bounds. What could have been done
further is a related sensitivity analysis, e.g., how the estimated
regime under Assumption 7 is consistent with the estimated regime
under their bound analysis.

The partial identification approach to learn optimal dynamic regimes
and welfare is also considered in \citet{han2019optDTR}. In this
paper, I characterize the identified set of the optimal dynamic sequential
regime that maximizes a general form of welfare that nests the average
potential outcome. From data that are generated from multi-period
settings where unmeasured confounders exist, I use an IV method to
establish the sharp partial ordering of welfares (in terms of possible
regimes), the identified set of the optimal dynamic regime, and sharp
bounds on the optimized welfare (i.e., the value function). I also
propose a wide range of identifying assumptions that can tighten the
results and can be easily incorporated within the paper's framework.

Although the partial identification approach may not deliver informative
recommendations for optimal regimes, I believe that it still has value
for policy making for at least three reasons. First, as mentioned
above, one may conduct sensitivity analyses for priors used in point
and partial identification. Second, even with uninformative results,
suboptimal regimes can be easily detected and removed from the policy
menu. Third, the lack of informativeness may guide the researcher
and the policy maker toward better data collection, so that the informativeness
in policy suggestions is not driven from potentially arbitrary assumptions
but from more informative data.

\section{Concluding Remarks}

It is exciting to see works like \citet{cui2019semiparametric} and
\citet{qiu2020optimal} in the literature on individualized treatments
that was pioneered by \citet{murphy2001marginal}, \citet{murphy2003optimal},
and \citet{robins2004optimal} and has grown rapidly since then. The
literature seems to have entered a new territory as it begins to concern
the problem of treatment endogeneity. Many interesting questions can
emerge along the way. I would like to list a few here. First, it would
be interesting to consider the issues of policy invariance in the
context of individualized treatments. For example, it is well known
that different IVs induce different individuals to respond to the
policy. \citet{qiu2020optimal} acknowledge this point when they define
the local average encouragement effect for compliers. Indeed, the
definition of this local parameter raises the question of who the
compliers are, especially because the parameter is defined by contrasting
two different IVs. More generally, it would be important to define
an appropriate---i.e., policy invariant---target population relevant
to the optimized policy, which is inevitably defined by comparing
multiple hypothetical policies. Second, as a way of overcoming the
identification challenge under endogeneity, it would be interesting
to continue developing frameworks that are designed to incorporate
rich data structures, e.g., panels, multiple IVs, and high-dimensional
covariates. Third, related to the second question, extending the online
approach (e.g., reinforcement learning (\citet{shortreed2011informing})
to the context of treatment endogeneity and non-compliance would be
interesting for future work. Finally, the theory of estimation and
inference is worth further exploration. For example, for their asymptotic
theory, \citet{qiu2020optimal} introduce a convergence rate requirement
for the estimated optimal regime that does not seem innocuous. It
would be interesting to investigate whether it is possible to develop
asymptotic theory for the treatment and encouragement effects directly
from the optimization problems (the equations (2) and (3) in their
paper), which may bypass the intermediate estimation of the optimizers
(the optimal regimes).

\bibliographystyle{ecta}
\bibliography{comment_JASA}

\end{document}